\let\rho=\varrho
\let\phi=\varphi
\def\real{\mathbb R}
\newtheorem{theorem}{Theorem}[section]
\newtheorem{lemma}[theorem]{Lemma}
\def\ie{{\it i.e.}}
\def\eg{{\it e.g.}}
\def\HALF{{\textstyle\frac{1}{2}}}
\newenvironment{remark}{\noindent\textbf{Remark.}}{}
\def\HALF{{\textstyle\frac{1}{2}}}
\def\C#1{\begin{center}{#1}\end{center}}
\def\fref#1{Fig.~\ref{#1}}
\def\tref#1{Table~\ref{#1}}
\def\eref#1{(\ref{#1})}
\def\sqq{\sqrt{1-\epsilon ^2}}
\def\sq{\sqrt{\cdot}}
\def\pmin{p_{\rm min}}
\def\pmax{p_{\rm max}}
\def\OO{{\cal O}}
\def\stwo{{\sqrt{2}}}
\def\emph#1{{\it #1}}
\def\d{{\rm d}}
\let\epsilon=\varepsilon
\def\R{{\rm R}}
\def\L{{\rm L}}
\def\d{{\rm d}}
\def\eref#1{(\ref{#1})}
\def\sref#1{Sect.~\ref{#1}}
\def\pmax{p_{\rm max}}
\def\IL{I_\L}
\def\IR{I_\R}
\def\ILeff{I_{\rm L,eff}}
\def\IReff{I_{\rm R,eff}}
\begin{document}
\title{Rattling and freezing in a 1-D transport model}
\bigskip
\author{Jean-Pierre Eckmann${}^1$ and Lai-Sang Young${}^2$}
\institute{
${}^1$D\'epartement de Physique Th\'eorique et Section de Math\'ematiques,
Universit\'e de Gen\`eve, CH-1211 Gen\`eve 4 (Switzerland)\\
${}^2$Courant Institute of Mathematical Sciences\\
New York University, NY 10012 (USA)
}
\maketitle
\thispagestyle{empty}

\tableofcontents
\bigskip
\begin{abstract}  We consider a heat conduction model introduced in \cite{CE2009}.
This is an open system in which particles exchange momentum with a row 
of (fixed) scatterers. We assume simplified bath conditions throughout, and give
a qualitative description of the dynamics extrapolating from the case of a single
particle for which we have a fairly clear understanding. The main phenomenon
discussed is {\it freezing}, or the slowing down of particles with time. As particle
number is conserved, this means fewer collisions per unit time, and less contact
with the baths; in other words, the conductor becomes less effective. Careful 
numerical documentation of freezing is provided, 
and a theoretical explanation is proposed. Freezing being an extremely slow 
process, however, the system behaves as though it is in a steady state for
long durations. 
Quantities such as energy and fluxes are studied, and are found to have 
curious relationships with particle density.
\end{abstract}

\bigskip
\section{Introduction}
In this paper, we report on a study of a model introduced in 
\cite{CE2009} and further studied in \cite{CEM2009}.
In this model of 1-dimensional scattering, there is a row of scatterers, 
which are equally spaced and tied down, the interval between adjacent
scatterers being called a ``gap".  Each scatterer carries 
a single variable, namely its ``momentum", called $q_i$. Moving
about in the system are particles whose momenta and positions
are denoted by $p_j$ and $x_j$ respectively. The particles do not
see each other directly, and interactions between
particles and scatterers follow the rules of elastic scattering --
except that the scatterers do not move. 
The two ends of the chain are connected to heat baths. Unlike \cite{CE2009,CEM2009}, which considered the grand canonical case,
the \emph{canonical} 
case is studied in this paper, \ie, the number of particles is fixed,
and when a
particle reaches a bath, a new one is injected immediately with a
velocity chosen from some distribution.  Details of this model are
given in \sref{s:model}.

The fact that the scatterers have {\it no recoil}, \ie, their 
positions are fixed, simplifies 
the analysis, making the local dynamical picture more tractable. As we will
show, locally the dynamics are characterized by {\it rattling}: 
a particle will rattle back 
and forth between two scatterers, 
carrying momentum from one to the other until a certain state is
reached, at which time it exits this interval (passing through one of
the scatterers), and begins to rattle between the two scatterers in 
its new interval. 

Global problems such as transport and asymptotic states of the chain
are beyond the reach of rigorous mathematics. Using bath distributions
concentrated on two short intervals, \ie, injected particles have momenta
$\sim \IL $ for the left bath and $\sim \IR $ for the right, we have a fairly complete description of the dynamics in the case of a single particle mediating the transfer or momentum among an arbitrary number of scatterers. This description
is based on a combination of rigorous and heuristic arguments, 
and the results are confirmed in simulations.
Extrapolating from this single-particle case, we believe we also have
a reasonable understanding in the case where particle density is
low. As particle density increases, the situation becomes  
very complex. We will report on 
a number of observations and numerical findings, some of which 
we must admit we had not anticipated at the outset.

Our original intent was to study the nonequilibrium steady states of
this model (with Maxwellian bath distributions), which we had 
thought would resemble steady states in  similar models such 
as \cite{EY2006,LY2010} or \cite{BLY2010}; for a review of the 
broader subject, see \cite{Lepri2003}. 
What we found instead was that the model considered in this paper
{\it freezes}, \ie, it slows down with time: Through
its exchanges with the scatterers, each particle
acquires, from time to time, very low values of momentum. 
Once such a low momentum is acquired, the particle
spends an extremely long time traveling between
two scatterers, which are unit distance apart. During this time it has
no influence on the evolution of the system. We find that as time
goes on, more and more particles are stuck in these low energy states,
with fewer and fewer collisions occurring per unit time.\footnote{Freezing here 
refers to the slowing down of particles, not to falling of scatterer temperatures.
On the contrary, mean scatterer temperature
climbs slowly with time, as will be explained in
\sref{s:numevidence}} 
We further conjecture -- not without reason --
that ``at the end of time", the action in the entire system is carried, 
for the most part, by a single particle. That is to say, for arbitrarily
long durations, the dynamics are those in the single-particle case. 

We will provide numerical documentation of freezing;
that takes quite a bit of computing time since 
the process is gradual and very slow. We will also connect the no-recoil property of scatterers directly to freezing in one physical dimension,
thereby providing a theoretical basis for understanding this phenomenon. More precisely, we will argue that
at least in closed systems (no heat baths), {\it on balance
particle energy is dissipated
through collisions with scatterers that do not recoil}.
In modeling, it is not uncommon to accept an unphysical property
to make a model more analyzable. In this case we have found that
the no-recoil property has an unintended consequence.

While freezing is the main message of this paper, we would argue
 that quantities such as total scatterer energy and fluxes in and
out of the chain -- observed in real time -- are entirely relevant. 
This is because the freezing process is extremely slow: We have 
found that following a relatively brief initial transient,
the system will settle down to what is perhaps best described as a
{\it quasi-stationary state}, \ie, a very-slowly-varying state which for
many purposes can be treated as stationary. We have found that
during this infinitely long period of quasi-stationarity, the chain exhibits 
macroscopic behaviors that are very challenging to explain. 

\medskip
This paper is organized as follows: In Section 2, we
describe the model and explain the mechanics of rattling. 
Section 3 discusses
transport properties of the chain assuming quasi-stationarity, and 
Section 4 is devoted to the documentation and discussion of freezing.

\section{Model and local dynamics}\label{s:coordinates}

\subsection{Model description}\label{s:model}
We consider $N$ scatterers at positions $i=1,\dots,N$ and a ``bath" each
at positions $0$ and $N+1$. The $i$th scatterer has a momentum
$q_i\in\real$. There are $n$ particles in the system; the momentum of
the $j$th particle is denoted by $p_j\in\real$. 
Each particle moves with uniform velocity until it reaches
either a scatterer or a bath.
A collision between a particle and a scatterer or bath
is called an {\it event} in this paper.

In a particle-scatterer collision, momentum
is exchanged as follows: For $p=p_j$ and $q=q_i$,
the ``scattering'' replaces these values by $p'$ and $q'$ given by
\begin{equa}
  p' &= -\sigma p + (1-\sigma) q~,\\
  q' &= (1+\sigma) p+ \sigma q~.\\
\end{equa}
In these coordinates, which were used in \cite{CE2009}, the
scattering matrix is
\begin{equ}
  \begin{pmatrix}
    -\sigma& 1-\sigma\\
1+\sigma & \sigma\\
  \end{pmatrix}\ ,
\end{equ}
and the energy  
\begin{equ}
  E=\,p^2+\frac{1-\sigma}{1+\sigma}q^2
\end{equ}
is preserved. In the present paper, it is more convenient to 
rescale $p$ and $q$ in such a way that $(p,q)$ lies on a circle: 
\begin{equ}
 \hat p= p~,\qquad \hat q =\sqrt{\frac{1-\sigma }{1+\sigma }}q~.
\end{equ}
In these new coordinates, the matrix takes the form
\begin{equ}
  \begin{pmatrix}
    -\sigma & \sqrt{1-\sigma^2}\\
\sqrt{1-\sigma^2}& \sigma\\
  \end{pmatrix}~.
  \end{equ}
We will be mostly interested in $\sigma\approx 1$ and thus introduce
$\epsilon =\sqrt{1-\sigma^2}$. 
In these coordinates (and we will omit the ``hat'' from now on), the
energy is $p^2+q^2$, and the scattering matrix is
\begin{equ}
S=  \begin{pmatrix}
    -\sqrt{1-\epsilon ^2} & \epsilon \\
\epsilon &
\sqrt{1-\epsilon ^2}\\
  \end{pmatrix}~,
\end{equ}
which geometrically can be seen as a reflection $p \mapsto -p$ 
followed by a clockwise rotation by an angle 
$\theta = -\arctan(\epsilon/\sqrt{1-\epsilon ^2})$.

Once the particle has scattered, it continues on its way, moving to
the right of the scatterer if $p>0$ and to the left if $p<0$.

When a particle
reaches a ``bath", it disappears and is instantaneously replaced by 
one with a new energy. 
We have used the word ``bath" here to denote an infinite source of energy
or momenta. No Maxwellian character is assumed, as that will be
destroyed by the no-recoil property of the scatterers in any case.
For simplicity, we fix two positive numbers, $\IL$ and $\IR$, which are 
nominal values at the bath, and introduce a small amount of
randomness by fixing a (small) parameter 
$\alpha >0$. The momenta of injected particles are 
{\it i.i.d.}~random variables with a uniform
distribution on $[\IL(1-\alpha ),\IL(1+\alpha )]$ for the left bath, respectively
$[-\IR(1+\alpha ),-\IR(1-\alpha )]$ for the right bath. In
general, unless stated otherwise, simulations have been done
with
\begin{equ}
\IL= 1~,\quad \IR= 3~,\quad\epsilon =0.05,\quad {\rm and} \quad \alpha =0.1~,
\end{equ}
and $400$ scatterers.
It will be convenient to define
\begin{equ}
  \ILeff= \IL(1-\alpha )~,\qquad  \IReff= \IR(1-\alpha )~,
\end{equ}
because, as we shall see, these are the values that occur most
prominently when $\epsilon $ is close to $0$.

\subsection[Rattling]{Rattling}\label{s:rattling}

The mechanism by which momentum is transported between adjacent
scatterers can be
summarized as follows: 
\begin{itemize}
\item[(1)] {\it Conditions for ``crossing".} When a particle collides with
a scatterer, it can be reflected back, or it can pass through the scatterer;
the latter is the definition of ``crossing". 
If a particle with momentum $p>0$ collides with a scatterer with 
momentum $q<p \sq /\epsilon$, then by the rules
of scattering, it will be reflected. Similarly for $p<0$ and 
$q> p\sq /\epsilon$. In particular, crossing is impossible if $p\cdot q<0$.

\item[(2)] {\it Rattling.} Consider an interval bounded by two scatterers with 
momenta $q_1^0$ (on the left) and $q_2^0$ (on the right), and assume
that $q_1^0>0$ and $q_2^0<0$. Suppose a particle has just
entered this interval. The particle will  rattle back and forth 
carrying momentum from one scatterer to the other, and it will exit (on one of
the two sides) in finite time. After this crossing, the momenta of
the two scatterers are $q_1\approx q_2^0$ and $q_2\approx
q_1^0$.
\end{itemize}

\begin{figure}[t]
  \C{\psfig{file=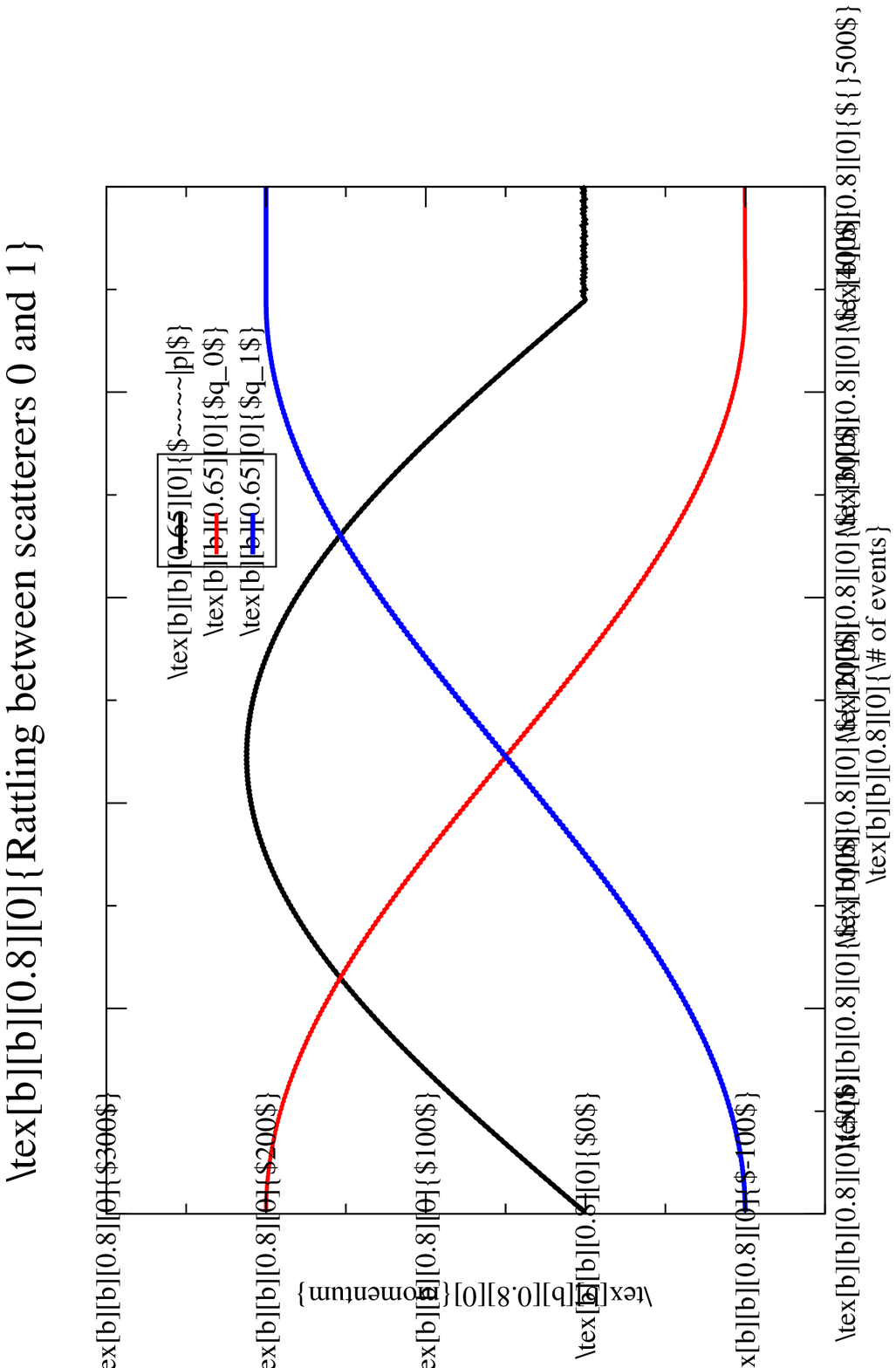,width=0.7\textwidth,angle=270}}
  \caption{Rattling between two consecutive scatterers (for $\epsilon =0.01$). The horizontal
    axis is events. The vertical axis is momentum for $q_1$ and $q_2$
    and $|p|$. (The sign of $p$ switches after each collision during rattling.)
Note the exchange of the values of $q_1$ and $q_2$.}\label{f:rattling}
 \end{figure}

\subsubsection{Derivation}\label{s:derivation}

We will analyze the case when a particle starts from scatterer 1,
has initial momentum $p>0$ so that it flies to the right,
rebounds from scatterer 2, then rebounds from scatterer 1 again,
and so on.
We will follow its trajectory until it exits the interval between scatterers 
1 and 2, assuming throughout that there is no other particle in the picture.

Applying
the matrix $S$ twice (first on $p$ and $q_2$ and then on $p$ and $q_1$) we get the sequence:
\begin{equ}
  \begin{matrix}
    p & \to & -p\sq +\epsilon q_2&\to & p(\sq)^2 -q_2 \epsilon \sq
    +\epsilon q_1~,\\
    q_1 & \to & q_1&\to & q_1\sq-p\epsilon\sq 
    +\epsilon^2 q_2~,\\
    q_2 & \to &q_2\sq+\epsilon p &\to & q_2\sq +\epsilon p~.\\
  \end{matrix}
\end{equ}
To discuss this sequence, it is convenient to formulate the
problem as a differential equation, which takes the form 
\begin{equa}\label{e:diffequ}
  \frac{\d p}{\d n}&=-\epsilon (q_2-q_1)+\OO(\epsilon ^2)~,\\
  \frac{\d q_1}{\d n}&=-\epsilon p+\OO(\epsilon ^2)~,\\
  \frac{\d q_2}{\d n}&=+\epsilon p+\OO(\epsilon ^2) ~.\\
\end{equa}
Here $n$ is  ``discrete time", the passage of each
unit of which corresponds to one lap (two reflections) for the particle.
The solution is then, up to order 2 in $\epsilon $ with initial
conditions $p$, $q_1$, $q_2$:
\begin{equa}\label{e:diff}
  p(n)&= (q_1-q_2)\sin(\epsilon n\stwo)/\stwo+p\cos(\epsilon
  n\stwo)~,\\
  q_1(n)&= \HALF(q_1+q_2)+\HALF (q_1-q_2)\cos(\epsilon
  n\stwo)-p\sin(\epsilon n\stwo)/\stwo~,\\
  q_2(n)&= \HALF(q_1+q_2)+\HALF (q_2-q_1)\cos(\epsilon
  n\stwo)-p\sin(\epsilon n\stwo)/\stwo~.\\
\end{equa}
Observe next that if $p\approx 0$ then $p(n)$ passes again through 0 for $n\approx 
\pi/(\stwo \epsilon )$. Thus this is about the time when the particle
will leave the interval between scatterer 1 and scatterer 2 (either to
the left or to the right). We find from \eref{e:diff} and
from
$\cos(\epsilon n\stwo)=-1$ that $q_1(n)=q_2$ and $q_2(n)=q_1$. Thus,
in the case $p\approx 0$ the values of $q_1$ and $q_2$ are simply
exchanged after the rattling, as we have asserted.

The time evolution of $(p(n), q_1(n), q_2(n))$ is illustrated in \fref{f:rattling}.

The question is now on which side the particle will exit. We claim,
and have checked numerically, that the probability $P_{\rm R}$ to
leave to the right as compared to that of leaving to the left, $P_{\rm
  L}$ satisfies approximately
\begin{equ}\label{e:exit}
 P_{\rm R}/P_{\rm L} = |q_{\rm L}| / |q_{\rm R}|~,
\end{equ}
where 
$q_{\rm L}$ is the value of the left-hand $q$ at the time of exit. We
have no proof of this,
but intuitively, this ratio can be understood as follows: For the
particle to exit, its momentum must satisfy $|p|< \epsilon |q|/\sq$,
with $p$ and $q$ of the same sign. By the study above, each collision
is a rotation by an angle of order $\OO(\epsilon )$, in the $p,q$
plane, so the probabilities to first satisfy $|p|< \epsilon |q_{\rm
  L}|/\sq$ or $|p|< \epsilon |q_{\rm R}|/\sq$ are proportional to $|q_{\rm L}| / |q_{\rm R}|$.


\subsection{The baths}\label{s:bath}

We consider next how particles enter the system. For definiteness,
consider the left bath. Observe that typically, many attempts
are made before a new particle successfully crosses the first scatterer.
This is because in order for a particle to leave the chain for (say) the
left bath, it must cross the first scatterer from right to left. At the time of
this crossing, $q_1$ must necessarily be $<0$,  
and in order for a new particle 
with $p \approx \IL $ to cross it again to enter the chain, 
$q_1$ must be raised to $ \gtrsim \IL /\varepsilon$; see the first
summary item at the beginning of \sref{s:rattling}.  Raising $q_1$ to
this required value is done via a rattling mechanism similar to that discussed 
in the previous subsection, except that bath injections are all
 $\approx \IL $. The following gives a bound on the number of attempts
 before a successful entry:

\begin{lemma}\label{l:bathscatterer}
Suppose (i) particles with momenta $p\in[\pmin,\pmax]$ are injected,
and (ii) the first scatterer has initial momentum $q$ and its momentum 
is not altered by any other particle during the
period in question. Then
after at most 
\begin{equ}\label{l1}
  n_{\rm max} =\max\left(0, \frac{\pmax-\epsilon q}{\epsilon (\pmin-\pmax/2)}\right)
\end{equ}
collisions with the scatterer
the particle will cross into the system, and after crossing, 
\begin{equ}\label{l2}
\frac{\pmin}{\epsilon } \le q_{\rm after}\le \frac{5\pmax}{4\epsilon }~.
\end{equ}
\end{lemma}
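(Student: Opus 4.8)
The plan is to regard the repeated injection attempts as a single scalar recursion for the scatterer momentum and to read off from it both the number of attempts and the post-crossing window. Since by hypothesis (ii) no other particle touches the scatterer, each attempt is exactly one collision, and as long as the $k$th injected particle (momentum $p_k\in[\pmin,\pmax]$) is reflected, the momenta obey $q_k=\epsilon p_k+\sqq\,q_{k-1}$ with $q_0=q$. By the first summary item of \sref{s:rattling}, the $k$th particle is reflected precisely when $q_{k-1}<(\sqq/\epsilon)\,p_k$ and crosses as soon as $q_{k-1}\ge(\sqq/\epsilon)\,p_k$; since $p_k\le\pmax$, a crossing is in any case forced the first time $q_{k-1}\ge(\sqq/\epsilon)\,\pmax$. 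As the $p_k$ are random, I would argue uniformly over admissible sequences, using only $p_k\in[\pmin,\pmax]$. This reduces the lemma to controlling how fast the recursion pushes $q$ up to the crossing threshold.

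First I would bound the per-collision gain $\Delta_k=q_k-q_{k-1}=\epsilon p_k-(1-\sqq)\,q_{k-1}$ from below. The subtlety is that the damping term $(1-\sqq)q_{k-1}$ grows with $q$ and could in principle swallow the gain $\epsilon p_k$. On a reflection step, however, the reflection condition gives $q_{k-1}<(\sqq/\epsilon)p_k$, whence $(1-\sqq)q_{k-1}<\frac{\sqq(1-\sqq)}{\epsilon}p_k$, and the elementary estimate $\sqq\le1-\HALF\epsilon^2$ collapses this prefactor to $\le\HALF\epsilon$ (the case $q_{k-1}<0$ being trivial). Hence $\Delta_k>\epsilon(\pmin-\HALF\pmax)$, which is strictly positive \emph{exactly} because $\pmin>\HALF\pmax$ -- this is where the shortness of $[\pmin,\pmax]$ enters. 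Dividing the gap to the crossing threshold by this guaranteed increment then yields a bound of the form \eref{l1} on the number of collisions, the $\max(0,\cdot)$ covering the case in which $q$ already exceeds the threshold and the first particle crosses.

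Next I would locate $q_{\rm after}=\epsilon p_K+\sqq\,q_{K-1}$ at the crossing step $K$. For the lower bound, the crossing condition $q_{K-1}\ge(\sqq/\epsilon)p_K$ gives $\sqq\,q_{K-1}\ge\frac{1-\epsilon^2}{\epsilon}p_K$, so $q_{\rm after}\ge\epsilon p_K+\frac{1-\epsilon^2}{\epsilon}p_K=p_K/\epsilon\ge\pmin/\epsilon$, the two $\epsilon$-terms combining exactly. For the upper bound, step $K-1$ was a reflection, so applying the reflection condition one step earlier gives $q_{K-1}=\epsilon p_{K-1}+\sqq\,q_{K-2}<p_{K-1}/\epsilon\le\pmax/\epsilon$; inserting this together with $\sqq<1$ into $q_{\rm after}$ yields $q_{\rm after}<(1+\epsilon^2)\pmax/\epsilon\le\frac{5\pmax}{4\epsilon}$, the last inequality needing only $\epsilon\le\HALF$.

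The one genuinely delicate point is the increment bound of the second step: the gain is only $\OO(\epsilon)$ and degrades as $q$ climbs, so a crude estimate would let it vanish before the threshold is reached, and the remedy -- using the reflection condition itself to cap the damping term uniformly -- is precisely what makes the hypothesis $\pmin>\HALF\pmax$ indispensable. Everything else is bookkeeping, apart from the harmless degenerate case $K=1$ (immediate crossing), where the upper bound on $q_{\rm after}$ requires the initial $q$ to sit at or below the crossing threshold $\sim\pmax/\epsilon$; this is exactly the physical situation, since a fresh particle is injected only after $q$ has been driven negative by a departure.
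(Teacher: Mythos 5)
Your proof is correct and follows essentially the same route as the paper's: both arguments use the reflection condition $q<p\sqq/\epsilon$ to cap the damping term $(1-\sqq)q$ by $\epsilon\pmax/2$, yielding the guaranteed per-collision gain $\epsilon(\pmin-\pmax/2)$ for \eref{l1}, and then read off \eref{l2} from the crossing condition (lower bound) and the pre-crossing bound on $q$ (upper bound). If anything, your derivation of the upper bound in \eref{l2} -- tracking one extra reflection to account for the $\frac{5}{4}$ factor under $\epsilon\le\HALF$ -- is more explicit than the paper's one-line justification.
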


\begin{proof}
If, before crossing, $p>\epsilon q/\sqq$, then the particle will be
reflected and $q$ changes to $q'=q\sqq+\epsilon p$. 
This can only happen if $q< \pmax\sqq/\epsilon $. Since also $p<\pmax$ we get
\begin{equa}
  q'&\ge\epsilon \pmin +q -q(1-\sqq)\\
&\ge q+\epsilon \pmin -\pmax (1-\sqq)\sqq/\epsilon \\
& \ge q+ \epsilon (\pmin-\pmax/2)~.
\end{equa}
This proves \eref{l1}. Then \eref{l2} follows at once from $q<
\pmax\sqq/\epsilon $, the bound on $p$ and the scattering rules.
\end{proof}

\begin{remark}
  Particles that get through have $p$
 almost exclusively at the low end of theoretically admissible range.
This is entirely expected: The $q$'s change extremely slowly, by 
$\OO(\varepsilon)$ each hit, while the $p$'s given out by the baths are
random {\it i.i.d.},
so the first $p$ to get through is naturally 
at the low end of the range.
\end{remark}

\section{Transport in the chain}\label{s:transport}

\subsection{Single-particle dynamics}\label{s:single}

We have a fairly complete description of the dynamics in the case 
where there is a unique particle in the system. As we will see,
this simple dynamical system is a useful reference point 
for studying chains with low particle densities (\sref{s:lowdensity}). It will also 
be a source of information for a conjectured ``final state" for 
all systems (see \sref{s:final}).

\medskip
\begin{remark}
For simplicity, we will describe all our findings in terms of the numbers ``$1$''
  and ``$3$'', which are the injection values, left and right. Our results apply, 
  {\it mutatis mutandis}, to arbitrary 
  values of $\IL$ and $\IR$, though simulations get more difficult
  when the ratio $\IL/\IR$ is very small or large. 
\end{remark}

\medskip
We find numerically that independently of the initial state of the
chain, after a long enough transient, all $q_i$ become -- with small
variations -- either a ``$1$'' or a ``$-3$''. Here is
what we mean: When a particle enters the system, more precisely
when it crosses its first scatterer, it carries a momentum close to 
$ \ILeff$ or $-\IReff$; see \sref{s:bath}. When such
a crossing occurs, the scatterer in question generally has momentum
$\sim 1/\epsilon$ times that of the particle; see \sref{s:derivation}. We assert here 
that following a long enough transient, all $q_i$ take on
essentially these values. When it is $\sim \ILeff/\epsilon$, we call it a ``$1$";
when it is $ \sim - \IReff/\epsilon$, we call it a ``$-3$".
This being the case, we start the discussion from such a configuration.

Also, in this single-particle case, the only action is where the particle
is, and it is simpler to go by {\it events time} in lieu of real physical time.  
This is what is used in the discussion to follow.

Consider the interval between $q_i$ and $q_{i+1}$. From \sref{s:rattling}, 
we see that locally, the possibilities are limited: 

\medskip
$(q_i, q_{i+1})= (1,1)$: the particle can only enter from the left and
exit on the right;  
no change in scatterer energies upon exit; similarly

 \medskip
$(q_i, q_{i+1})= (-3,-3)$: the particle can only enter from the right
 and exit on the 
left, with no change in scatterer energies upon exit;

\medskip
$(q_i, q_{i+1})= (1,-3)$: the particle can only enter from the left
and exit on the right;  
scatterer energies are transformed from $(1,-3)$ to $(-3,1)$

\medskip
$(q_i, q_{i+1})= (-3,1)$: the particle cannot enter this interval.

\medskip
\noindent We propose to view all $1$'s as indistinguishable, as are all $-3$'s,
and to consider only {\it crossings}, \ie, when the particle moves from 
one interval to another (ignoring the rattling that occurs in between). 
Seen this way, the dynamics
of a chain are completely described by (i) the walk of the particle, and 
(ii) flipping or not of scatterer energies along the way, where the
only ``flipping" possible is from $(1,-3)$ to $(-3,1)$.
An important observation 
before going further is that {\it positive momenta, \ie, $1$'s, can only move to the
right, while negative momenta, \ie, $-3$'s, can only move to the left}.

We have found numerically (and have a proof for
the single-particle case with $\IL \not \approx \IR$) that  a particle cannot
stay in an interval forever. 
We {\it assume} for purposes of the discussion below that
after the particle transforms $(1,-3)$ to $(-3,1)$ it will exit 
with probability $3/4$  to the left and with  $1/4$ to the right.
We now argue that 

\medskip

{\it in a typical steady-state configuration, there are 
$3$ times as many $1$'s 

\quad as there are $-3$'s.}

\medskip \noindent 
The argument below is part rigorous and part heuristic, 
and the phenomenon 
is fully confirmed by numerical experimentation. Notice first the following: 

\smallskip
\noindent (a) If we define a ``move'' to be a swapping of two adjacent energies, then 
each move by a 1 is accompanied by a move by a $-3$, and vice versa.

\smallskip
\noindent (b) The number of $1$'s that enter the chain is equal to the number 
of $1$'s that exit 
(it is a steady state) which in turn is equal to the number of $-3$'s that enter (because each exit of 
a 1 is followed by the entrance of a $-3$) etc.

\smallskip
In view of (a) and (b) above, we need to show that the $-3$'s 
move along the chain $3$ times as fast as 
the $1$'s. Specifically, we will argue that  in each ``sequence of flips'' 
(to be defined), a $-3$ moves on average $3$ times
in a row, while $3$ different $1$'s move one step each.

Indeed,
consider a configuration of $1$'s and $-3$'s such as
$$
\dots -3 \ 1 \ 1 \ 1 \ 1 -3 \ 1\ 1 \ 1 \ -3 \ 1 \dots~.
$$
Suppose the particle is in $(1,-3)$, the middle $-3$. 
Then it flips it to $(-3,1)$, and exits the gap.
If it goes to the left, it is again in a $(1,-3)$, and the same scenario is 
repeated. This {\it sequence of flips} ends when either (i) the particle
exits to the right after a flip, or (ii) the $-3$ comes up against the $-3$ on the left.

If (i) occurs, we might as well think of the particle as going directly to
$(1,-3)$, the $-3$ on the right (because the only exit possible in $(1,1)$ is
to the right) and nobody has moved in the meantime. The last paragraph
then repeats itself for this new $-3$. 
If (ii) occurs, then the baton is passed to the $-3$ on the left and the
same story happens.

Now given a configuration of $1$'s and $-3$'s, imagine 
doing a walk hopping from $-3$ to $-3$.
In view of our assumption of left/right exits, if the ratio
of $1$'s to $-3$'s is $>$ $3:1$, then (i) above will occur more often,
resulting in the walker visiting the right bath more often than the left. Likewise,
when there are too few $1$'s, (ii) will happen more often, leading to more
frequent visits with the left bath. As noted earlier, the two baths
are visited with equal frequency in a steady state.  Hence
in a steady state, there are roughly $3$ times as many $1$'s as there are $-3$'s.  

\medskip
Results of numerical verifications for various assertions are
shown in \tref{t:table1}. 

\begin{table}
\begin{center}  \begin{tabular}{|c | c | c | c |}
\hline
$\IR$  & $($\#$q\sim1)/($\#$q\sim -\IR)$& mean $p_\R$ injected & exits R
$/$ exits L\\ 
\hline
3 & 2.96& -2.7304  & 1.000031\\
5 & 5.12& -4.5508  & 1.000019\\
7 & 7.29& -6.3712  & 1.000006\\
9 & 9.03& -8.1915  & 1.000005\\
\hline
  \end{tabular}
\end{center}
\caption{Illustration of 1-particle theory. The nominal
    injections are $\IL=1$ and $\IR=3,5,7,9$, and $\alpha=0.1$.
Mean momenta that entered the chain from the right, \ie, 
crossed the rightmost scatterer, are quite close to
$-\IReff = -(1-\alpha )\IR=-0.9 \IR$ (third column). 
   The second column confirms our asserted    
    $\IR : \IL$ ratio of $\IL$ and $-\IR$ in steady states, and the 4th 
    confirms the assertion on left/right exits.}\label{t:table1}
 %
 %
\end{table}

\bigskip
\noindent {\bf Summary.} In the case of a single particle, we find:

\medskip
\noindent (1) The system settles down to an events-time steady state
in which all scatterer energies reflect bath injections with small variations.

\medskip
\noindent (2) In this steady state, we have

(i) $\langle q_i\rangle \equiv 0$, for each $i$ and

(ii) the energy profile is constant along the chain, with  
\begin{equ}\label{e:meanq}
\langle q_i^2\rangle  \ \approx \frac{|\IReff|}{\ILeff+|\IReff|}
(\ILeff^2/\epsilon ^2) + \frac{\ILeff}{\ILeff+|\IReff|}
(\IReff^2/\epsilon ^2)~,
\end{equ}

\medskip
\noindent (3) Momentum transport along the chain is {\it ballistic} in the 
sense that if we view scatterer energies as a sequence of $1$'s and $-3$'s
as defined above, then all the $1$'s move monotonically from left to right
and all the $-3$'s move from right to left.

\subsection{Dynamics of chains with multiple particles}\label{s:lowdensity}

Consider first the case where there is more than one particle in the
chain, but that the particle density, $\rho$, defined to be 
the number of particles per scatterer, is $\ll 1$.
In this case, the dynamical picture can be
described as follows:

Most of the time, no two particles share a gap or find themselves in adjacent gaps,
and the actions of the particles are ``independent'', meaning  
they do as in the single particle case. As a consequence
 the only prominent changes
in scatterer energies occur when a particle enters an interval 
$(i, i+1)$ with $q_i > q_{i+1}$; these values are flipped 
to $q_i' \approx q_{i+1}$ and $q_{i+1}' \approx q_i$
as the particle exits this interval. 

When two or more consecutive intervals 
are occupied by particles, the actions of the particle can interfere 
with one another. One such scenario is illustrated in \fref{f:2rattling}.
In the window of time depicted, the first half of the events show
particle 1 flipping $q_{i-1}$ and $q_i$. Before this flip is complete, 
however, particle 0, which is in the interval $(i, i+1)$, springs into
action, causing $q_{i+1}$ to rise at the expense of the pair $q_{i-1}$
and $q_i$, both of which are pushed down to compensate.

\begin{figure}[h]
  \C{\psfig{file=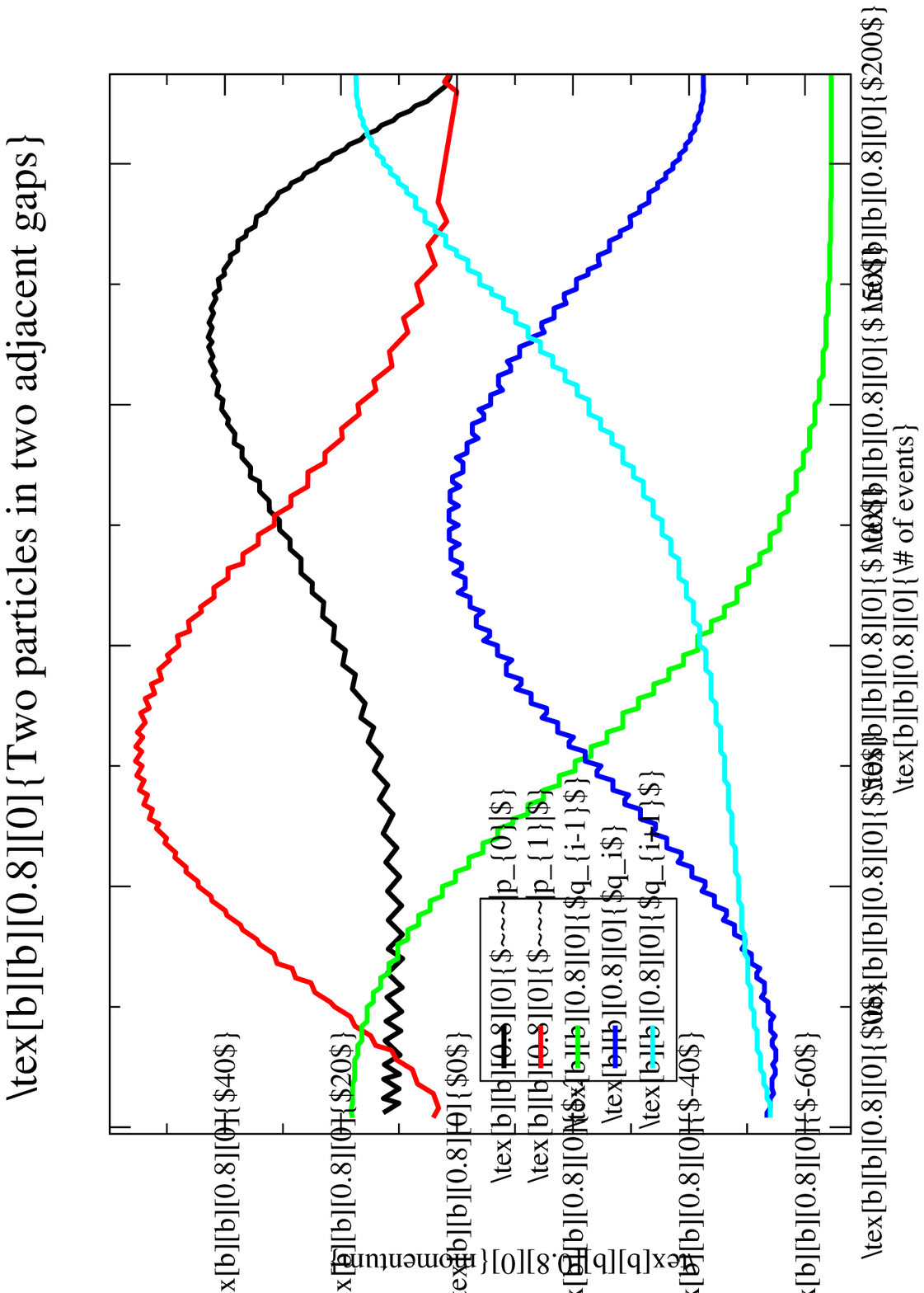,width=0.6\textwidth,angle=270}}
  \caption{The interaction of two particles ($p_0$ (black) and $p_1$
    (red)) with  scatterers at locations $i-1, i$ and $i+1$.
    The horizontal axis is events. The vertical axis is momentum 
    for the $q_j$ and the $|p_k|$. Particle $0$ rattles between $i$ and $i+1$ and
    particle $1$ rattles between $i$ and $i-1$.}\label{f:2rattling}
 \end{figure}

While the kind of interaction in \fref{f:2rattling} can happen, we have found 
that in a good fraction of encounters, 
scatterer momenta are in fact swapped in a {\it sequential manner}, 
leading to a simple reordering of the $q_i$'s. This can be understood
as follows: Suppose a particle with momentum $p_0$ has just entered
an interval bounded by $q_1 \approx 1/\epsilon$ and $q_2 \approx -3/\epsilon$.
In general, $p_0$ can be anywhere between $0$ and $1$ (roughly speaking),
and it takes $1/p_0$ units of time to traverse the interval. With the first reflection, 
$|p|$ starts
to ramp up by ${\cal O}(1)$ per collision, but notice that the effects of these initial 
collisions on the $q$'s are quite insignificant, and that remains true until $|p|$ 
reaches $\sim 1/\epsilon$, because initially $|q| \sim 1/\epsilon$, and 
with each collision, it changes by only $\approx \epsilon |p|$ (see \sref{s:rattling}). 
The same is true as $|p|$ ramps down. What this tells us is that in real time,
the ``flipping" of the $q$'s occurs effectively on a time interval that comprises 
only a fraction of the total
``rattling time"; the smaller the initial $p_0$, the smaller this fraction. 
Thus even when two particles occupy adjacent intervals, if they do not flip
the $q$'s at roughly the same time,  the action is likely to be sequential. 

Baths are much more likely to disrupt this sequential pattern.
Recall that it takes many attempts before a particle from the bath
crosses the first scatterer (\sref{s:bath}). Consider, for example, 
the case where
particle 0 is in the gap $(0,1)$ and particle 1 is in $(1,2)$,
and that particle 1 becomes active while particle 0 is in the middle
of attempting to 
gain entrance. Unlike the situation described in the last paragraph,
the back-and-forth motion of particle 0 between the left bath and 
scatterer 1 is relentless. Its attempt to bring $q_1$ up can easily
interfere with the attempt by particle 1 to flip $q_1$ and $q_2$,
which involves pulling $q_1$ down. 

Baths have been observed to lead to more complicated
scenarios than in \fref{f:2rattling}. We will not enter into such 
a discussion here, but see \sref{s:energy}B.

\medskip
The overall picture can be summarized as follows: 

\medskip
\noindent (1) In {\it low particle-density regimes}, 
say for $\rho < 1$, one can, 
to some degree, extrapolate from the
single-particle picture: 
Many of the $q_i$'s are $1$'s and $-3$'s (see \sref{s:single}), and  
a large majority of the changes in scatterer energies are flips
of the kind discussed above -- except that flips can now occur in 
different parts of the chain in random order. Occasionally, 
simultaneous action in same or adjacent gaps leads to the 
creation of intermediate $q$-values. These values, as with all
$q$-values, are eventually ``flushed out" of the chain. We have found
that among the intermediate $q$-values created, the very small ones 
(we will call them ``zeros'') are the most stubborn: they tend to remain 
in the chain for very long times.\footnote{Zeros form barriers which 
particles cannot cross easily, causing them to oscillate
back and forth in a short stretch of the chain for a long time. The 
mechanism is as follows: 
Assume scatterer momenta are $0,1$ or $-3$,
and focus on a particular $0$. In order for this $0$ to move,
a particle must enter one of the intervals adjacent to it.
If a particle approaches from the left, then the configuration is likely
to be $(1,0)$. The particle then flips $(1,0) \to (0,1)$ and exits 
(in all likelihood) to the right, our $0$ having moved one slot to
the left. Similarly, if a particle approaches from the right, then 
$(0,-3) \to (-3,0)$, the particle exits to the left, and our $0$ moves
to the right. Particle approaches can occur in any order, 
but notice that the {\it net move} of our $0$ cannot exceed the
number of particles in the chain: Suppose the number of particles 
is $k$, and our $0$ is $k$ slots to the left of where it was originally.
The argument above tells us that all $k$ particles must now be to the right of it, so
the next approach is guaranteed to be from the right! This oscillatory behavior
eventually ends when our $0$ acquires ``by accident" a reasonable size
(flips do not return exact pre-flip $q$-values).}

\medskip
\noindent (2) {\it As $\rho$ increases}, 
simultaneous action of particles
in neighboring intervals becomes more commonplace. These actions
may, in principle, enhance or cancel each other, but we see much
more of the latter: {\it pulls in opposite directions by competing forces
lead to decreased amplitudes in the oscillations of $q$-values}. 
The decrease in amplitude on the $-3$ side is somewhat more pronounced, 
possibly due to the fact that it is easier to interrupt a ``longer swing".
This phenomenon is important; it will help explain some of the 
observations in the next subsection. 

\medskip
\noindent (3) In {\it higher particle-density regimes}, \eg, for $\rho \ge 5$,
the process becomes untidy, even chaotic. To gain some intuition, 
we invite the reader to imagine, \eg, for $\rho = 10$,
how $4000$ particles move about in a chain with $400$ scatterers.
These particles have quite different velocities, but with a large
enough number of them present, many will be active simultaneously,
and some of the active 
particles will occupy same or adjacent gaps. Each particle behaves
as though its mission is to put into ``the right order" the momenta of 
the two scatterers at the ends of the interval on which it resides, as was discussed earlier. 
The net effect of all the pulls and tugs together will determine the time evolution of the system. 

\medskip
Snapshots of scatterer energies along a chain for two $\rho$'s
are shown in \fref{f:qprofiles}. 

\begin{figure}[t]
  \C{\psfig{file=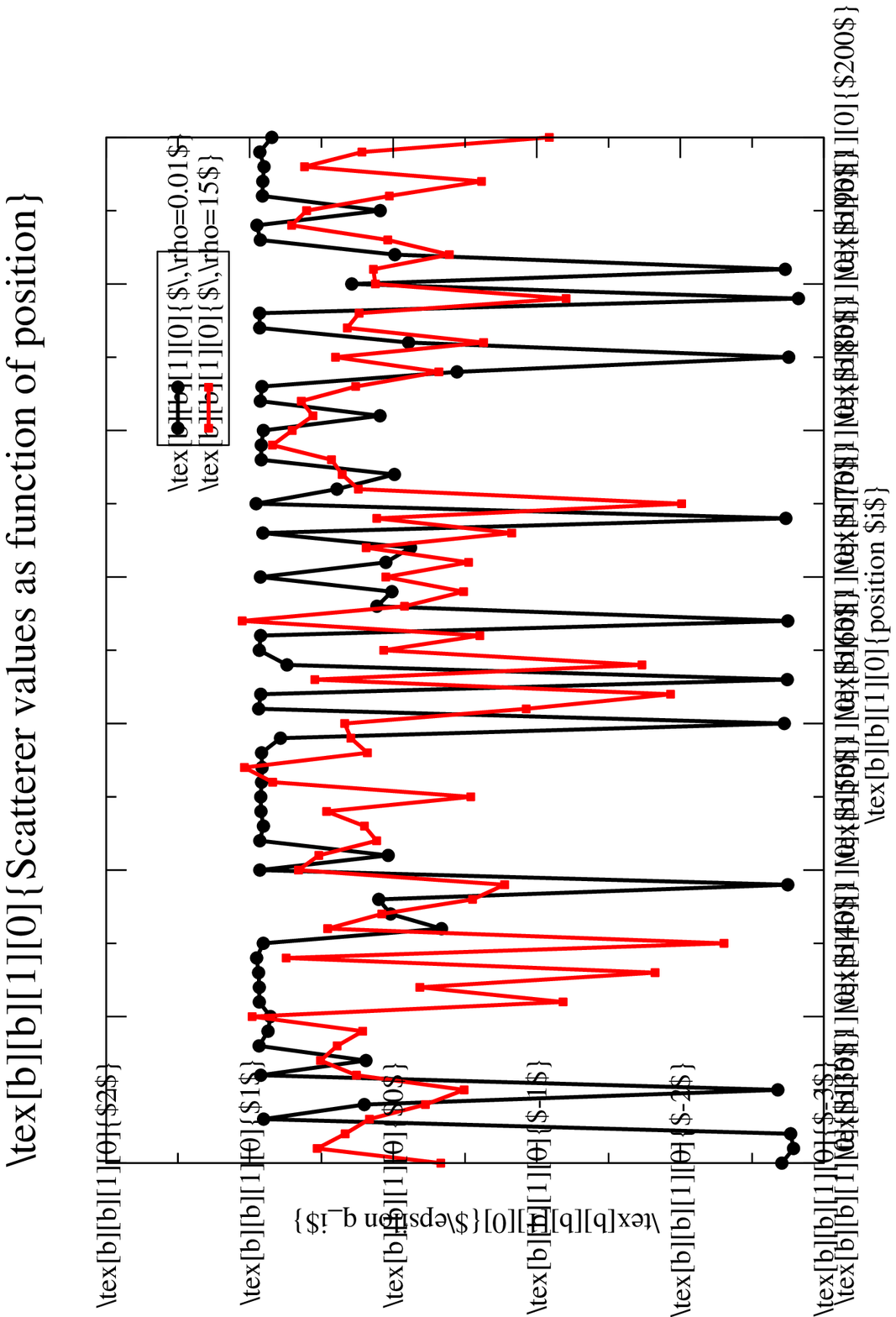,width=0.8\textwidth,angle=270}}
  \caption{A snapshot of scatterer momenta $q_i$ for 70 scatterers
  away from the boundary of a 400-long chain, for 2 different densities $\rho$. 
For $\rho=0.01$, $\epsilon q_i \approx 1, -3$ (injected values)
for many $i$, with a few values near $q=0$.
For $\rho=15$, $\epsilon q_i$-values are more evenly distributed but concentrated 
on an interval strictly smaller than $[-3,1]$; this is due in part to the decreased 
amplitudes of oscillations explained in the text and also to the fact that at any one point
in time, more scatterer momenta are likely to be in the middle of getting ``flipped".}
\label{f:qprofiles}
 \end{figure}

\subsection{Chain energy and fluxes as functions of $\rho$ (and time)}\label{s:energy}

We discuss here two important quantities in relation to energy, namely
the mean total scatterer energy and influx-outflux rates  
into the baths. Since fluctuations are large, to represent these quantities
in a meaningful way, we need to assume there is some stationarity
-- although
as we will see in the next section, that in itself is a point of contention.
Nevertheless, for purposes of the present discussion, we will
show data from simulations averaged over very large windows 
of time, and these windows will increase in size as time progresses. 

\medskip
\noindent {\bf Details on statistics.}
All of our simulations are performed 
for a very large number of scattering events. 
To keep measurements comparable for different densities $\rho$, we define an
  {\bf epoch} to be $1.6\cdot10^9\cdot \rho$ scattering events. There being
  400 scatterers in the chains we use, this
  means that each particle performs on average $4\cdot10^6$
  scatterings per epoch.  Since $\epsilon =0.05$, 
  rattling between two scatterers usually ends after $2\pi/(n\sqrt{2})$ collisions.
  This translates into about $200,000$ crossings per epoch per particle,
  which in turn leads us to expect that
  each particle ``sees'' the bath several hundred times per epoch,
  the exact number depending on whether its motion is closer to
  ballistic or that of a random walk.   Our measurements are averaged
  over each epoch, and we take up to $5000$ epochs.

\bigskip
\noindent {\bf A. Total scatterer energy.} We consider mean total scatterer
energy in the chain. More precisely, for each value of $\rho$
we compute the time average of $\sum_i q_i^2$ in each epoch, with
data taken over many epochs. The results are shown in \fref{f:lsyall}.

Note the agreement with $\rho=0$ (or single-particle) values in \sref{s:single}:
substituting the simulation values into \eref{e:meanq} leads to a
prediction of $\langle q_i^2\rangle \sim 972$, which is quite close to the
limiting value shown. As $\rho$ increases, it is evident that 
total scatterer energy decreases.
The sharpest decrease occurs for $\rho$ between $0$ and $1$.
We have checked that this is due largely to the accumulation 
of larger and larger numbers
of $0$'s (as explained in \sref{s:lowdensity}). As $\rho$ continues to increase,
the decreased amplitudes of oscillations (also explained in
\sref{s:lowdensity})  
will no doubt lead to decreased mean scatterer energy, but there is
another factor that could potentially seriously impact the situation, 
and that is the rates at which energy flows into and out of the chain
at the two ends. We now look at these quantities more closely. 

\begin{figure}[h]
  \C{\psfig{file=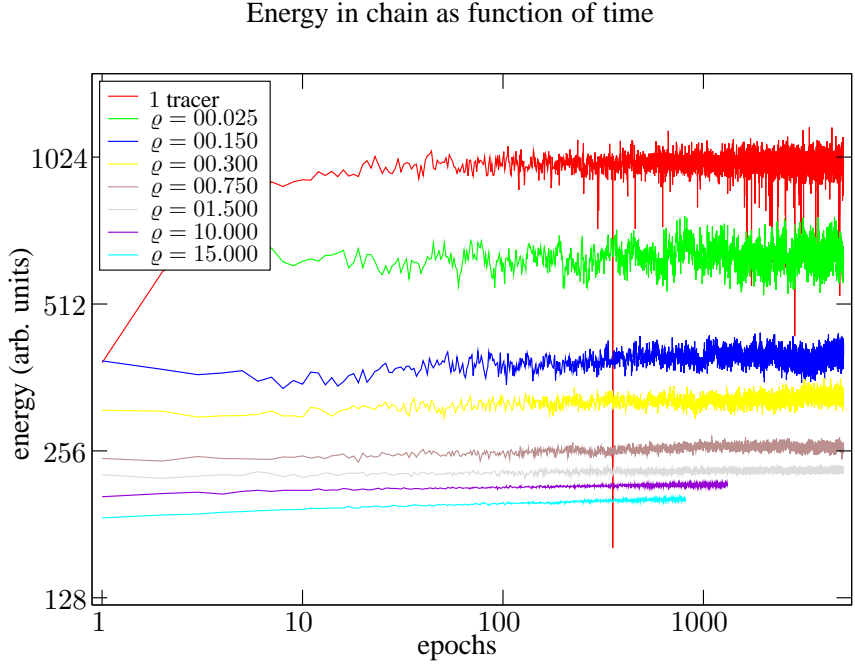,width=0.7\textwidth,angle=270}}
  \caption{Starting from a random initial condition, the mean total
  scatterer energy of a
  chain is measured and plotted as a function of epoch. 
  In time, this energy appears to
  reach some $\rho$-dependent limit, which
  decreases as $\rho$ increases. Closer inspection, however, shows
  that except in the case of a single particle, mean energy drifts
  slowly upwards with epoch.}\label{f:lsyall} 
 \end{figure}

\bigskip
\noindent {\bf B. Energy fluxes.} We consider the following
4 quantities: $\Phi_{\rm L, in}$ and $\Phi_{\rm L, out}$, 
the rates (with respect to physical time) at which energy flows into and out 
of the chain from the left, and $\Phi_{\rm R, in}$ and $\Phi_{\rm R, out}$,
the corresponding quantities on the right. Notice that once we fix the system 
parameters $N=$ number of scatterers, $\rho=$ particle density, 
$\IL$ (resp.~$\IR$) $=$ injected momenta from left (resp.~right) bath,
these fluxes are determined entirely by the system. 

Some simulation results are shown in \fref{f:flux}. 
Observe first that if the system is in a steady (or quasi-stationary) state, 
then by the conservation of energy, one would expect 
$\Phi_{\rm L, in} - \Phi_{\rm L, out} = \Phi_{\rm R, out} - \Phi_{\rm R, in}$,
both quantities being the {\it total flux} $\Phi$ {\it across the system}.
\fref{f:flux} does indeed show these two quantities to be equal. 
The plot shows that activity 
per unit time increases with density, which is expected. But it also
reveals a number of surprising facts. We mention 
two of them:

\smallskip
\noindent 
(1) {\it $\Phi$ as a function of $\rho$.} \  The increase is sublinear. 
To some degree, this can be explained by 
the incomplete flips discussed in \sref{s:lowdensity} and \fref{f:qprofiles}.
The inset in \fref{f:flux} shows, in fact, that $\Phi \sim \rho^\gamma$
for $\gamma \approx 2/3$, which begs for an explanation. 

\smallskip
\noindent 
(2) {\it Contact of system with the right (or hotter) bath.} \  Observe 
that for $\rho =10$ and $15$, $\Phi_{\rm R, out}/\Phi_{\rm R, in} > 2/3$.
The system being in a steady state, particles enter and leave through
the right end of the chain at the
same rate, and each entering particle carries with it an energy 
$\sim 9 = \IR^2$ (see \sref{s:bath}).
This implies that the mean energy carried {\it out} by 
exiting particles is $>6$, and a particle with $p> \sqrt 6$ can exit
only when $q_N$,
 the momentum of the rightmost scatterer, is $> \sqrt 6/\epsilon$ (\sref{s:rattling}). 
Thus the picture cannot resemble that 
 in \fref{f:qprofiles}. 
 
 We conjecture that the dynamics 
next to the hotter bath are somewhat volatile: 
Decreased amplitudes in $q$-oscillations, 
especially on the negative side, make it harder for particles from the right bath 
to enter the system. As they gather between the bath and the $N$th
scatterer, they must exert a nontrivial downward pull on $q_N$. 
Unable to overcome this pull, a ``wall" of large positive $q_i$-values 
builds up to the left of $q_N$ (we have seen this wall many times).
The pressures continue to build, until at some moment $q_N$ swings upwards 
and a floodgate is opened.

\begin{figure}
  \C{\psfig{file=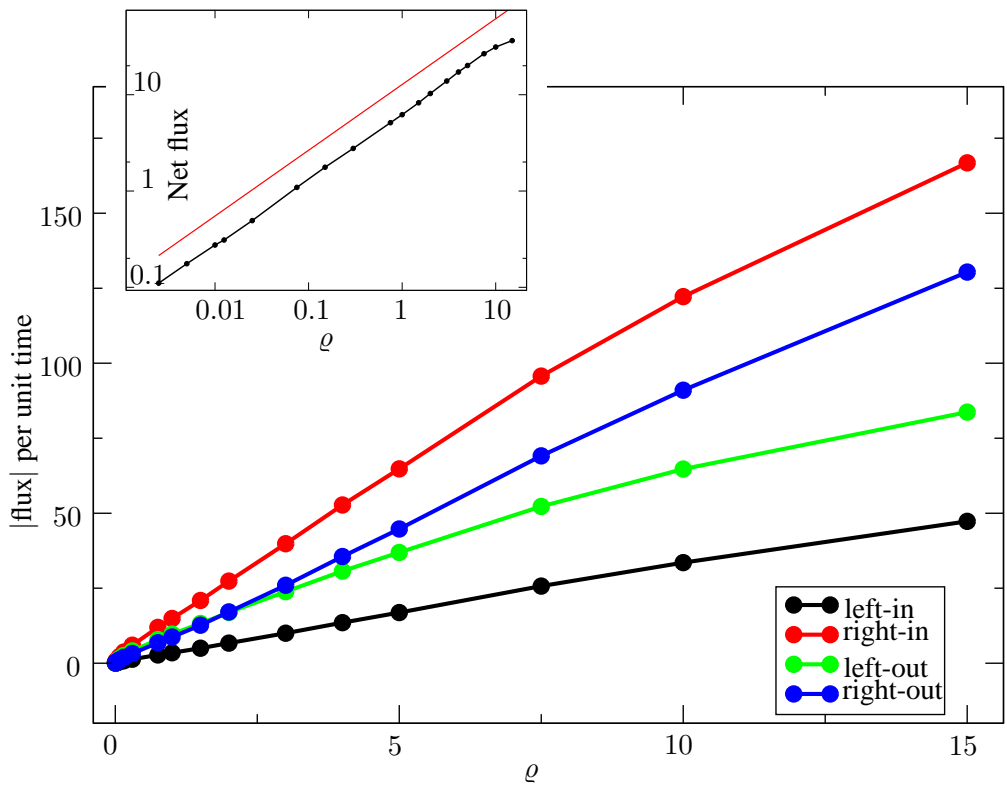,width=0.8\textwidth,angle=270}}
  \caption{The in-flux and out-flux of energy as a function of
  $\rho$. The data show that
    the total flux across the system grows
 sublinearly with the number of particles. The slope of the red line
 is about $0.68$ (inset, log-log). }\label{f:flux}
\end{figure}


\section{Freezing}\label{s:freezing}

One of the most salient features of this model is that it never
seems to approach a steady state, in that no matter how long
one waits, many quantities continue to drift, slightly but 
perceptibly. We believe this is due to {\it freezing}, referring to the
slowing down of the particles in the system. There is ample 
mathematical and numerical evidence to support this thinking, 
though we do not have rigorous proofs  and  it is hard for
finite-time simulations to provide conclusive evidence for a 
phenomenon that progresses on a slow (logarithmic) 
scale. In the subsections to follow, we examine the problem from 
several different angles, and attempt to both elucidate and document
this phenomenon. 

\subsection{Distribution of post-collision particle momenta}

To gain insight into  {\it how} freezing 
occurs, we propose to look at distributions of particle momenta 
immediately following collisions with scatterers. 
We begin with a very simple model for which we have a complete
description of the dynamics. 

\subsubsection{Closed system with one scatterer, one particle, and
 two walls}\label{s:onetwo} 

The physical space occupied by this simplified model is the interval
$[0,2]$. At $0$ and $2$, there are no baths but two walls, upon contact
with which the particle is reflected, \ie,  $p'=-p$ where $p$ is the momentum
of the particle. A single scatterer is placed at $1$,
and the interaction of the particle with the scatterer is as before. Thus
the phase variables are $\eta = (q,p,x)$ where $q$ is the momentum
of the scatterer, $q^2+p^2=c^2$ for
some constant $c$ which we may take as $1$, and $x \in [0,2]$ denotes
the position of the particle. The flow is denoted by $\Phi_t$.

Let $\epsilon$ be the constant in the scattering matrix in \sref{s:model}.

\begin{theorem}\label{t:4.1} There is a countable set of $\epsilon$ for which $\Phi_t$ is
periodic. These exceptional values of $\epsilon$ aside, the
following holds:

(i)  For every initial condition $\eta(0)=(q(0), p(0), x(0))$, we have
$$
\frac1T \int_0^T |p(t)|dt \to 0\ .
$$

(ii) The expected time between consecutive scattering events is infinite.
\end{theorem}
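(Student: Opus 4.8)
The plan is to reduce the flow to a rigid rotation on the energy circle and then exploit the non-integrable singularity of $1/|p|$. Parametrize the circle $p^2+q^2=1$ by $\theta$ via $(p,q)=(\sin\theta,\cos\theta)$, and set $\phi=\arcsin\epsilon$ (so $\sqrt{1-\epsilon^2}=\cos\phi$). In these coordinates a wall reflection $p\mapsto-p$ is $\theta\mapsto-\theta$, and a direct computation shows the scattering matrix $S$ of \sref{s:model} acts as $\theta\mapsto\phi-\theta$. The crucial first step is the scatterer-to-scatterer return map. Just after a collision at $x=1$ the particle moves away from the scatterer toward whichever wall the sign of $p$ dictates, reflects there, and returns to $x=1$; hence between two consecutive scatterer collisions there is \emph{exactly one} wall reflection, whether or not the collision was a ``crossing''. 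Composing, if $\theta$ is the post-collision angle the next post-collision angle is $\phi-(-\theta)=\theta+\phi$, so the return map is the rigid rotation $R_\phi:\theta\mapsto\theta+\phi\pmod{2\pi}$, independent of the crossing/reflection alternative.

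First I would dispose of the exceptional values. If $\phi/2\pi\in\mathbb{Q}$, then $R_\phi$ is periodic, the state $(q,p,x)$ returns after finitely many cycles, and $\Phi_t$ is periodic; these $\epsilon$ (those with $\arcsin\epsilon\in\pi\mathbb{Q}$) form a countable set. For every other $\epsilon$, $R_\phi$ is an irrational rotation, hence uniquely ergodic, so the post-collision angles $\theta_n=\theta_0+n\phi$ equidistribute with respect to normalized Lebesgue measure on the circle, for every $\theta_0$.

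Next I would pass between physical time and the event count. Within the $n$-th cycle the speed $|p|=|\sin\theta_n|$ is constant (a wall reflection preserves $|p|$) and the particle covers distance $2$, so the cycle has duration $\tau_n=2/|\sin\theta_n|$ and contributes $\int|p|\,dt=2$. Writing $T_N=\sum_{n=1}^N\tau_n$ gives the two identities $\int_0^{T_N}|p(t)|\,dt=2N$ and $T_N=2\sum_{n=1}^N 1/|\sin\theta_n|$. Both assertions then reduce to the single statement that the event-average $\frac1N\sum_{n=1}^N 1/|\sin\theta_n|$ diverges: part (ii) is this directly (the mean inter-event time is $\lim_N\frac1N\sum\tau_n$), while part (i) follows from $\frac1{T_N}\int_0^{T_N}|p|\,dt=\bigl(\frac1N\sum_{n=1}^N 1/|\sin\theta_n|\bigr)^{-1}$, the extension to arbitrary $T\in[T_N,T_{N+1}]$ being handled by the bound $\frac1T\int_0^T|p|\,dt\le 2(N+1)/T_N$ together with monotonicity.

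The heart of the argument, and the step I expect to be the main obstacle, is establishing $\frac1N\sum_{n=1}^N 1/|\sin\theta_n|\to\infty$. Equidistribution yields convergence of Birkhoff averages only for Riemann-integrable observables, whereas $1/|\sin\theta|$ is unbounded and \emph{not} integrable near $\theta=0,\pi$. I would circumvent this by truncation: for each $M>0$ the function $\min(M,1/|\sin\theta|)$ is continuous, so $\frac1N\sum_{n=1}^N\min(M,1/|\sin\theta_n|)\to\frac1{2\pi}\int_0^{2\pi}\min(M,1/|\sin\theta|)\,d\theta$, and since $\int_0^{2\pi}d\theta/|\sin\theta|=\infty$ the right-hand side tends to $\infty$ as $M\to\infty$. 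As $1/|\sin\theta_n|\ge\min(M,1/|\sin\theta_n|)$, this forces $\liminf_N\frac1N\sum_{n=1}^N 1/|\sin\theta_n|\ge\frac1{2\pi}\int_0^{2\pi}\min(M,1/|\sin\theta|)\,d\theta$ for every $M$, whence the average diverges. Finally I would handle the countably many degenerate initial conditions for which some $\theta_n\equiv0\pmod\pi$ (so that $p=0$ exactly and the particle freezes): there the flow halts and both (i) and (ii) hold trivially.
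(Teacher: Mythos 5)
Your proposal is correct and follows essentially the same route as the paper: reduce the dynamics to the first-return map on the collision circle, observe that (away from a countable set of $\epsilon$) it is an irrational, hence uniquely ergodic, rotation, and exploit the non-integrability of the return time $2/|p|$ near $p=0$. The only difference is that you spell out the truncation argument for the divergence of the Birkhoff averages of $1/|\sin\theta|$, a step the paper's proof asserts without detail.
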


\begin{proof} It is advantageous here to focus on
 the {\it collision manifold} $\Sigma = \{x=1\}$, which can be 
 identified with the unit circle
$\{p^2+q^2=1\}$. We let $f: \Sigma \to
\Sigma$ be the first return map, and use the following convention:
Given $\eta = (q,p,1) \in \Sigma$, we first flow, \ie, move right or left
according to whether $p>0$ or $<0$, and do the scattering
when the particle returns to $\Sigma$.
Since $S$ is reflection in $p$ followed by rotation by $\theta$ where
$\theta = -\arctan(\epsilon/\sqrt{1-\epsilon ^2})$,  
$f$ is simply rotation by $\theta$. 

As is well known to be the case, if $\theta$ is rational, then $f$, and
hence $\Phi_t$, is time periodic. This corresponds to a countable set
of $\epsilon$. Outside of this exceptional set of $\epsilon$-values, $f$ is an irrational
rotation, hence it preserves Lebesgue measure $m$ on $\Sigma$ and is
ergodic. Furthermore, all orbits of $f$ are uniformly distributed 
on $\Sigma$. 

We now return to the flow $\Phi_t$, which is what interests us. Let $R: \Sigma \to 
[0,\infty)$ be the first return time to $\Sigma$. Then $R(q,p) = 2/|p|$, 
so $\int R\,dm = \infty$. It follows that 
for every initial condition $\eta(0)$ with no exception, the fraction
of time the trajectory spends in any given neighborhood of $p=0$
tends to $1$ as time goes to infinity. 
\end{proof}

\noindent {\bf Remark 1.} Under the condition that $\theta$ is
irrational, the only invariant probability measures of this system are
$\{\delta_{(1,0,x)}, x \in [0,2]\}$ where $\delta_\eta$ denotes point mass
at the phase point $\eta$. That is to say, all invariant measures are
concentrated at phase points at which the particle is stationary. 
This continuum of
invariant measures taken together can be viewed as a {\it physical
measure} in the sense that given any initial condition $\eta(0)$,
$\frac1T \int \delta_{\eta(t)}dt$ converges to this family as $T \to \infty$.
If one so chooses, this can be used as a mathematical definition of {\it freezing}.

\bigskip
\noindent {\bf Remark 2.} The proof of Theorem~\ref{t:4.1} also tells us 
about the approach to this family of stationary measures, \ie, how 
freezing happens: Fix a very small positive number $p_0$.
The probability of acquiring a value of $p$ with $|p|<p_0$ in a collision
is $\sim p_0$, which is very small, 
but once such a value is acquired, the particle becomes ``inactive"
for $2/|p|$ units of time. It follows that {\it statistically}, 
 the particle spends $100\%$ of its time in these 
extremely-low-energy states, punctuated by (rare) periods of
activity in between.

\subsubsection{Back to the general case}

The analysis of models with $N$ scatterers, $n$ particles and
two baths  is beyond the scope of this paper
(or the state of the art of dynamical systems theory for that matter). 
Taking a hint from the simplified model in \sref{s:onetwo}, 
we  investigate numerically the distribution of momenta 
following scattering events.

Ideally, we fix a system, a particle, say particle $i$,
and an arbitrarily chosen initial condition. 
When particle $i$ exits the system, we name the particle that 
replaces it by the same name. We then make a histogram of the 
momentum of particle $i$ after each scattering event that involves it.
The nature of the distribution as $p \to 0$ contains much information:
If a post-collision momentum is $p$, then the particle 
will be carrying this value of momentum for the next $1/|p|$ 
units of time. Thus a PDF that is roughly constant and bounded away 
from $0$ as $p \to 0$ will imply freezing for the same reason
as before. 

In practice, it is hard to collect sufficient data for individual particles,
so we lump all particles together. 
Results of simulations are shown in \fref{f:pdistribution}.

\begin{figure}
  \C{\psfig{file=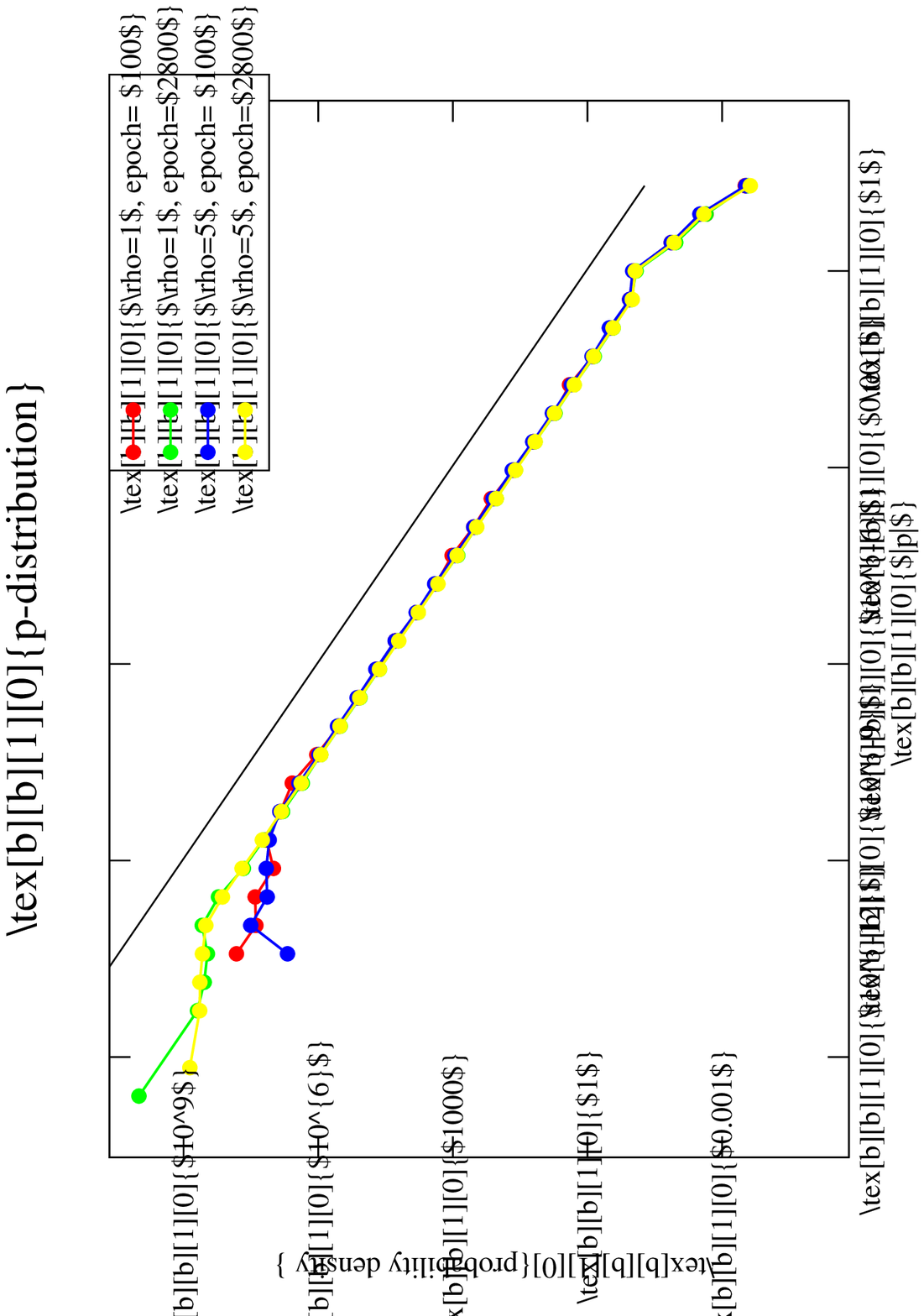,angle=270,width=0.9\textwidth}}
  \caption{Sample distribution of the momenta for  epochs $100$ and $1800$
    for 2 densities (averaged over snapshots $50$ to $100$, resp.~$900$ to $1800$). The black line indicates the slope corresponding
    to density $1/|p|$. Since the time during which a momentum $p$ is
    maintained is $1/|p|$ this demonstrates numerically that the PDF
    averaged over time would be flat near $|p|=0$.}\label{f:pdistribution}
\end{figure}

Notice that the divergence of the integral $\int \frac{1}{|p|} dp$, which lies
at the heart of the freezing phenomenon implied here, relies on
the one-dimensionality of the physical space.
On the other hand, had this model been ``normal", then
when put in contact with a Maxwellian heat bath
of temperature $T$, post-collision momenta should have a PDF
$\sim |p| e^{-\beta p^2}$ where $\beta = \frac1T$, and
should not be $\sim 1$ as $p \to 0$.

\subsection{Why this system freezes: a theoretical discussion}

In this subsection we seek to understand {\it why} freezing occurs
in our model, and propose an argument that connects 
the no-recoil property directly to energy dissipation in one spatial dimension. Recall that ``no recoil" refers to the fact that the scatterers
do not change their positions (in fact they do not move at all)
even though they carry  ``momenta", and the usual momentum 
exchange between scatterer and particle is assumed
in collisions. 
The argument proposed below  is heuristic, but we believe 
it sheds light on various features of this model and provides 
a theoretical basis for understanding the freezing phenomenon. 

There are two main ingredients in this argument:

\medskip
\noindent (1) Consider the collision of a particle, whose position 
and momentum
are denoted by $x$ and $p$, with a scatterer, the corresponding
coordinates for which are denoted by $y$ and $q$. Had it been normal
Newtonian mechanics,  the volume form
$dxdydpdq$ would be preserved, meaning for an infinitesimal 
volume element 
$V$ in $(x,y,p,q)$-space, $V$ and $\Phi_t(V)$ would have the same
volume assuming the collision occurs between times $0$ and $t$.
Moreover, $\Phi_t(V)$  would be stretched in the $y$-direction whenever 
the scatterer gained energy in a collision. Since assuming 
the scatterer has no recoil is
effectively ``leaving out" the $y$-coordinate, it follows intuitively 
that under this assumption, phase volume (in 3D) is 
contracted when energy is transferred 
from particle to scatterer in a collision.

This intuition is confirmed in
the following direct computation for our model:

\begin{lemma}
The tangent map through a
collision contracts (or expands) the phase space volume by a
factor $|p'/p|$. 
\end{lemma}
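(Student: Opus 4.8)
The plan is to read ``the tangent map through a collision'' as the comparison of a phase point a fixed time $\tau$ \emph{before} its collision with the corresponding point the same time $\tau$ \emph{after} it, and to compute the Jacobian of this map directly in the $3$-dimensional phase space $(x,p,q)$. The first thing I would stress is that this synchronization is the \emph{only} reading that can produce a nontrivial factor. Between collisions the vector field is $X=p\,\partial_x$ with $\dot p=\dot q=0$, which is divergence-free, so free flight preserves $\d x\,\d p\,\d q$; and the instantaneous scattering fixes $x$ and acts on $(p,q)$ by $S$, with $\det S=-(1-\epsilon^2)-\epsilon^2=-1$, so it too preserves $\d x\,\d p\,\d q$. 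Hence the global fixed-time flow $\Phi_t$ is volume preserving, and the asserted contraction can appear only once one synchronizes by the collision event rather than by absolute time.

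Concretely, I would place the scatterer at $x=s$ and parametrize an incoming point $(x,p,q)$ by its time-to-collision $\tau=(s-x)/p$. Flowing into the scatterer, applying $S$ to obtain $(p',q')$, and flowing back out for the \emph{same} elapsed time $\tau$ gives $x_{\rm out}=s+p'\tau=s+(s-x)\,p'/p$, with momenta $(p',q')$ depending on $(p,q)$ alone. This defines the collision map
\begin{equ}
G:(x,p,q)\longmapsto\Bigl(s+(s-x)\tfrac{p'}{p},\,p',\,q'\Bigr).
\end{equ}
Since $\partial x_{\rm out}/\partial x=-p'/p$ while $p'$ and $q'$ are independent of $x$, the matrix $DG$ is block lower-triangular: its first column is $(-p'/p,0,0)^{\!\top}$ and its lower-right $2\times2$ block is exactly $S$. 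Expanding along the first column yields $\det DG=(-p'/p)\det S=p'/p$, so $|\det DG|=|p'/p|$, which is the claim. The same algebra covers the reflecting case $p'<0$, where $x_{\rm out}<s$.

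I would then tie this back to ingredient~(1). The equal-time-before/after comparison is precisely the one that a genuine recoiling (Newtonian) collision would render volume preserving in the full $4$D space $(x,y,p,q)$, the discrepancy being absorbed by stretching in the $y$-direction; freezing the scatterer deletes $y$ and leaves the residual factor $|p'/p|$ in the $3$D slice. Equivalently, one may phrase the computation invariantly: the measure that Liouville volume induces on the collision cross-section $\{x=s\}$ is the flux form $i_X(\d x\wedge\d p\wedge\d q)=|p|\,\d p\wedge\d q$, and since $S$ preserves $|\d p\wedge\d q|$ while changing the speed weight from $|p|$ to $|p'|$, the collision scales this induced volume by $|p'/p|$.

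The determinant is a one-line computation, so the algebra is not where I expect the difficulty to lie. The real work—and the step I would be most careful about—is the conceptual setup: pinning down the map so that the factor is $|p'/p|$ rather than the trivial $1$ produced by the fixed-time flow. Getting the synchronization right (equal time into and out of the collision, equivalently the flux-form/cross-section viewpoint) is what makes the statement both correct and meaningful, and is exactly what licenses the heuristic in ingredient~(1).
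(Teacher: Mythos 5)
Your determinant computation is fine and is essentially the paper's own proof: in coordinates where the momenta $(p',q')$ depend only on $(p,q)$ and the outgoing position depends on $x$ only through the factor $\pm p'/p$, the Jacobian is block-triangular and equals $\det S\cdot(\pm p'/p)$, giving $|p'/p|$. The difference between your symmetric-time map $G$ and the paper's choice (a fixed time $t$ by which every point of the small volume has collided exactly once, so $x'=p'(t-|x/p|)$) is immaterial for this determinant.

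However, the framing you build around the computation contains a genuine error. You assert that the fixed-time flow $\Phi_t$ is volume preserving and that the factor $|p'/p|$ ``can appear only once one synchronizes by the collision event rather than by absolute time.'' This is false, and the paper's proof is precisely a computation of the Jacobian of the \emph{fixed-time} map $\Phi_t$ through one collision, with answer $|p'/p|$. Your argument for volume preservation of $\Phi_t$ decomposes it as (free flight) $\circ$ (instantaneous kick) $\circ$ (free flight), but the kick is not applied at a fixed instant to a full-dimensional piece of phase space: it is applied on the surface $\{x=s\}$, at a time $\tau=(s-x)/p$ that varies over the volume element, so $\Phi_t$ does not factor as a composition of fixed-time maps in this way. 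Your own flux-form paragraph refutes the claim: Liouville volume crosses the section at rate $|p|\,\mathrm{d}p\,\mathrm{d}q$ on the way in and $|p'|\,\mathrm{d}p\,\mathrm{d}q$ on the way out, so volume is genuinely created or destroyed at each collision at the ratio $|p'/p|$ --- there is no synchronization convention under which the flow conserves phase volume here. (For an honest billiard this tension is invisible only because $|p'|=|p|$.) The point matters beyond bookkeeping: the lemma is used in the paper to argue that the dynamics itself dissipates phase volume on average, so it is essential that the contraction be a property of the actual flow and not an artifact of how one chooses to compare times before and after a collision. I would delete the second and last paragraphs' claims about synchronization and keep the computation of $DG$ together with the flux-form remark, which is the correct invariant statement.
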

\begin{proof}
Consider a little volume $V$ which is at coordinates 
$(p,q,x)$ with $p<0$ and $x<0$ initially, and a time $t$ before which 
all initial conditions in $V$ have made exactly one collision
with a scatterer located at $0$. 
Then at time $t$ we have
\begin{equa}
p'=& -\sqrt{\cdot} p +\epsilon q~,\\
q'=& \epsilon  p +\sqrt{\cdot} q~,\\
x'= & 0 + p'(t-|x/p| )=-\sqrt{\cdot} p +\epsilon q(t-|x/p| )~.\\ 
\end{equa}
The tangent map is then the matrix
\begin{equation}
\begin{pmatrix}
-\sqrt{\cdot}  &\epsilon & 0\\
\epsilon  & \sqrt{\cdot} &0\\
 -\sqrt{\cdot}(t-|x/p| )&-\epsilon (t-|x/p| ) & p'/p  \\
\end{pmatrix}
\end{equation}
and the absolute value of its determinant is $|p'/p|$.
\end{proof}

As we have seen in \sref{s:rattling}, energy transfer can go either way
in collisions,
so in our model, phase-volume is sometimes expanded and sometimes
contracted.

\medskip
\noindent (2) Next look at the problem from a dynamical systems point 
of view.  For a system that expands phase-volume in some parts
of the phase space and contracts it in other parts, when Lebesgue 
measure is transported forward by the dynamics, it is likely to accumulate 
at invariant measures (called {\it physical measures}, see \cite{ER1985}) that are 
volume-contracting, or at least nonexpanding, on average. Intuitively, 
this is the reason why trajectories tend to sinks and not sources, 
attractors rather than repellors. It is connected to the idea of {\it entropy 
production}; see \eg, \cite{Ruellepositivity1996}. There are also 
rigorous mathematical results in the same spirit. For example, it is a mathematical
fact that the sum of all Lyapunov exponents of random diffeomorphisms 
is $\le 0$ always, and is $=0$ if and only if all the constituent maps preserve the same smooth invariant density; see \eg, \cite{Kifer1986}.

\medskip
We now combine the ideas in (1) and (2). Consider first a closed
system, \ie, a system that operates in isolation (and is not connected to baths). 
(2) says that asymptotically in time, volume is likely to be 
decreasing. According to (1), this means
net energy flow is from particles to scatterers.
To summarize, {\it the no-recoil property in our model 
has the following implication: from the view of the
particles, collisions with scatterers lead -- on average --
to energy dissipation.} This is consistent with a slow-down of particles.

In an open system, however, particles are ``recharged" following visits to
baths. Indeed, the expected number of crossings (referring to the 
crossing of scatterers) a particle makes between visits to the baths 
is likely to be finite,  and the process is renewed after each such visit.
Thus the freezing process, which occurs for the same reason as
in closed systems, cannot be completed. 
This reasoning suggests that {\it in open systems the effects of 
the slow-down should be more severe in longer chains}. (It also suggests
that if the transfer of energy from particles to scatterers is 
strictly positive, then the scatterers will heat up eventually, 
though we have not seen any evidence of that in simulations.)


\subsection{Numerical evidence of freezing}\label{s:numevidence}

We provide here two sets of numerical evidence that document 
the slow-down of particles in chains of length $400$.

\bigskip
\noindent {\bf A. Times between collisions.} Perhaps the most direct
way to document freezing in a chain is to measure the times between collisions as a function of epoch. 
We carried out a study which goes as follows: Start from an initial
condition, run the system for a very long time, and take snapshots 
of the system
at various points in time. At each such time $t$, we asked how long 
(in real physical time) it will be before $X \%$ of the particles have 
engaged in at least one collision. Suppose this happens
at time $t+T(t,X)$. Then the slowing down of the system will be
reflected in the growth of $T(t,X)$ as a function of $t $ for 
fixed values of $X$.
The results for $X=65$ for a number of $\rho$ are shown in \fref{f:corrwaiting35}.
\begin{figure}
\C{\psfig{file=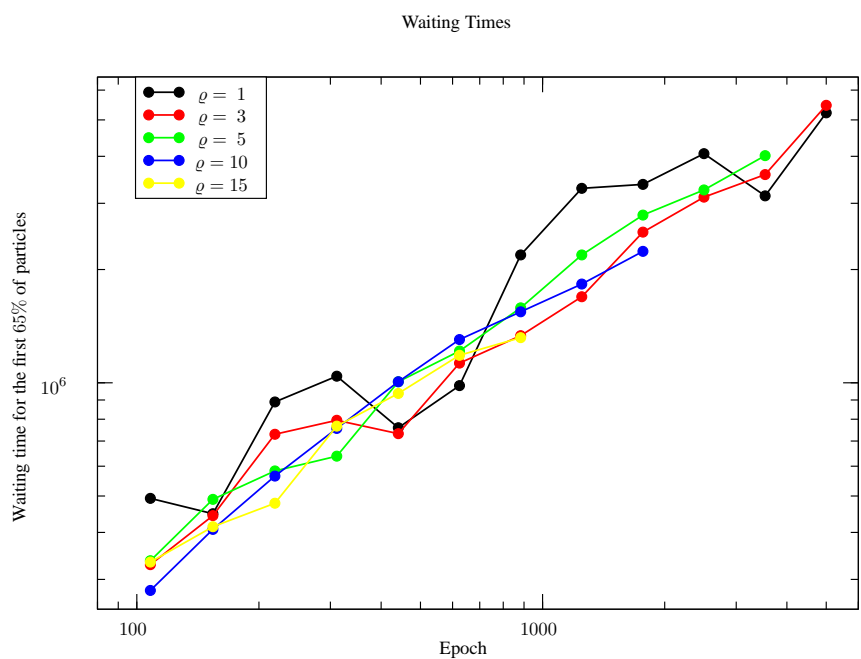,angle=270,width=0.9\textwidth}} 
  \caption{Mean waiting time for $65\%$ of particles as a function of epoch, for several
    densities. Each data point is an average over the epochs from the
    last data point to the current one (and the first from epoch $70$ to
  $100$).}\label{f:corrwaiting35}
\end{figure}

\bigskip
\noindent {\bf B. Energy of a system as function of time.} Consider the
following two facts:

\smallskip
\noindent (1) Since particles
with very low energies do not participate in the evolution of the system
(other than altering their own positions), the {\it effective} density in
 a system should be smaller than its true density
$\rho$. Moreover, with freezing, one would expect
this effective density to decrease with time. 

\smallskip
\noindent (2) We saw  in \sref{s:energy}
that the mean total scatterer energy of a system is larger for systems with smaller $\rho$. 

\smallskip These two facts together should imply that for a given
system with a fixed $\rho$, its mean total scatterer energy taken over different epochs should creep up slowly.
Small inclines in the plots in \fref{f:lsyall} are indeed evident.

\bigskip
\noindent {\bf Important remark.} While much of this section is focused
on freezing,  it is just as important to remember that 
the process occurs  extremely slowly: Times between collisions, as we have 
seen in \fref{f:corrwaiting35}, increase
only on a $\log \log$-scale. Following an initial transient, 
the models considered in this paper remain
in a {\it slowly-varying, quasi-stationary state} for a 
very long time -- indefinitely for practical purposes. 
For this reason, we submit that
macroscopic observations such as those in \sref{s:energy} 
are entirely meaningful, even though no well defined limiting 
values can be reached
in finite time.


\subsection{The final state}\label{s:final}

The following theoretical question begs for an answer, however:
What happens as time goes to infinity for a chain that is
arbitrarily long? In what follows, let us 
consider {\it events time}, skipping over the long periods with no 
collisions occurring anywhere in the chain. 

\medskip
\noindent {\bf Solo-particle conjecture:} {\it For long
periods (in events time), the baton is carried by a single particle,
\ie, it rattles between scatterers and 
walks about in the chain while all other particles are ``asleep".
These long solo walks are punctuated by (rare) periods during which multiple particles are simultaneously active. The baton  is passed from particle to particle during these
periods of activity.}

\medskip
The following is the basis of this conjecture: Fix $\epsilon_1 \ll 
\epsilon_2 \ll  1$. Let us say a particle is
``inactive" when $|p|< \epsilon_1$, ``active"
when $|p|> \epsilon_2$. After the system has been running for a long
enough time, each particle will be inactive nearly $100\%$ of 
{\it real} time, and active for essentially $0\%$ of real time. 
Unless there are hidden correlations (we have
no reason to believe there are any), the active
periods of a particle will most likely occur when all other particles
are inactive. Notice also that for $\epsilon_2$ small enough,
an active particle is likely to remain active for many collisions in 
a row assuming $p$-values following crossings are more or less random 
(see \sref{s:rattling}). If the two assumptions above are essentially
 valid, one will see, in {\it events
time}, long walks by solo particles.

Observe that during these long walks, the dynamics will, to some 
degree, resemble those in \sref{s:single}, in that the active particle 
will be ``flipping" scatterer energies wherever it goes (though we
do not have any basis for speculating
if the $q_i$'s will be mostly $-3$'s and $1$'s).

The conjecture above is obviously not testable, nor can any credible
evidence be produced in reasonable time.
We offer some data nevertheless on the distribution of
{\it lengths of runs}, which are defined as follows:
Let a system evolve and record the sequence of crossings of scatterers
as $j_1, j_2, \dots$ where $j_k = i$ if the $k$th crossing
involves particle $i$. (We have elected to count {\it crossings}
rather than events, and failed attempts to enter the chain 
are not counted as crossings.) Define a {\it run} to be
a maximal sequence of consecutive $j_k$'s with the same index.
Thus a run of length $n$ in a system with $k$ particles means
that one of the $k$ particles carries out a walk that comprises 
$n$ crossings before a crossing is made by one of the other $k-1$
particles.

As discussed earlier, we expect these runlengths to get longer
with time. A numerical experiment with 5 particles was carried out, 
and some of the results are presented in \fref{f:slowing}. 
The statistics were delicate due to the occurrence of extreme events. 
In the longest runs observed, a single active particle
made more than $1.5\cdot 10^8 $ crossings before any of the
other $4$ particles crossed a scatterer.

\begin{figure}[ht]
  \C{\psfig{file=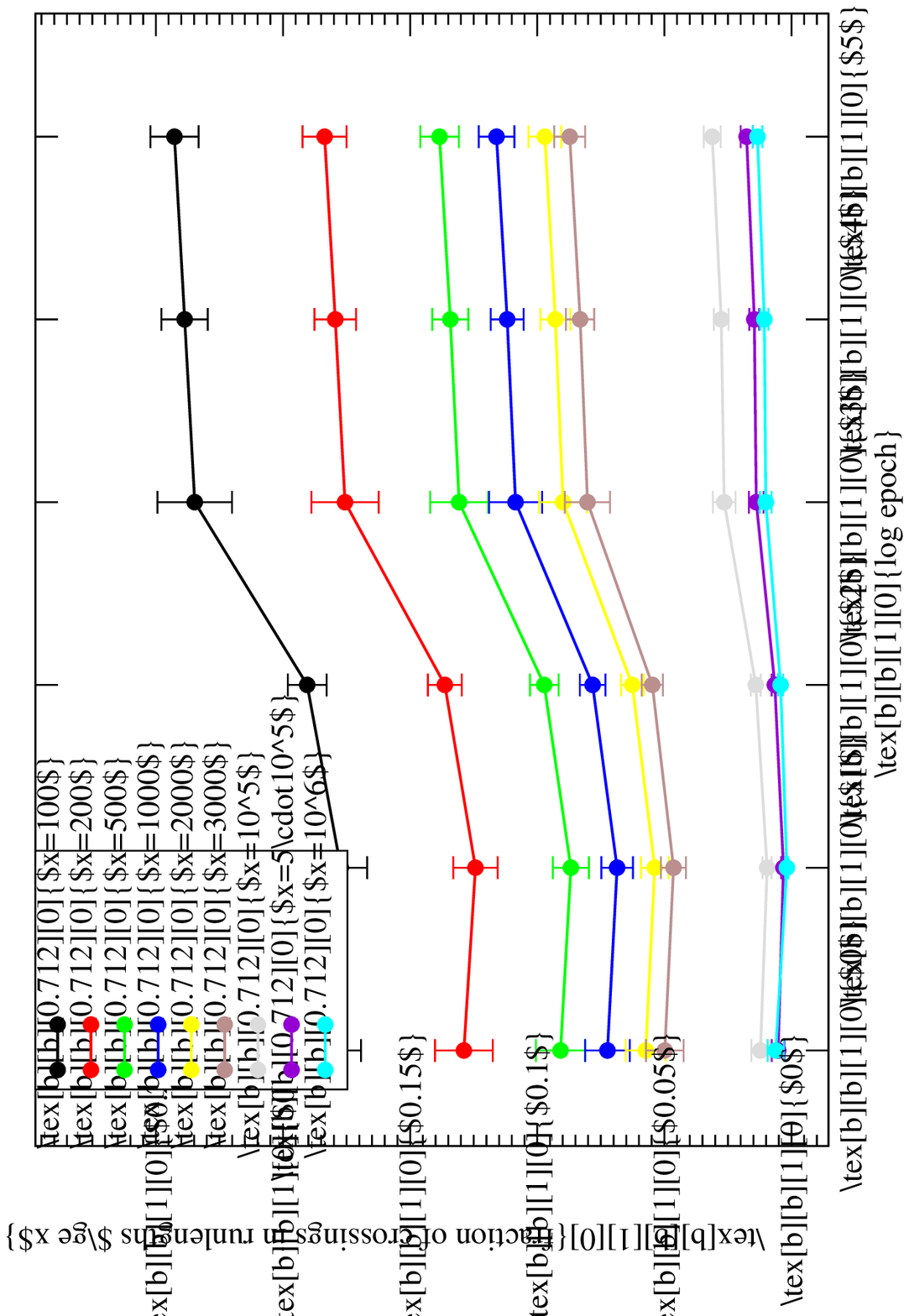,width=0.7\textwidth,angle=270}}
  \caption{Slowing down of particles, for a simulation with $5$
    particles and $600$ scatterers and the standard heat baths ($\sim
    1$, $\sim -3$). Shown is the percentage of crossings spent
    in runlengths longer than $100$, $200$, $\dots$ as a function of
    elapsed time (on a logarithmic scale, $\log=5$ corresponding to
    $8192$ epochs of $2\cdot 10^7$ crossings of scatterers, each). The
    figure is obtained by averaging over $9$ realizations.The
  upward trend of the runlength is clearly visible.}\label{f:slowing}
 \end{figure}


\section{Summary and conclusion}

This papers contains an analysis of a model of heat conduction introduced 
in \cite{CE2009}. Our aim was to deduce from the interaction in \cite{CE2009}
the large-time dynamical behavior of the system and to leverage that
to shed light on physically relevant quantities and phenomena.

Consider first the simplest case of a single particle. We showed that 
the particle rattles between two adjacent scatterers, systematically moving 
momentum from one to the other -- leaving to work on another pair of 
scatterers after completing its task. If one ignores physical time (which is
not meaningful here), this translates into monotonic or ballistic transport of 
momenta along chains, accompanied by oscillatory motion in scatterer momenta.  

When more particles are present, the dynamics can be seen as the sum total of 
individual actions, the net effect of a constant tug of war in which each particle 
moves bits of (positive) momentum from the scatterer on its left to the scatterer 
on its right. This way of viewing the microdynamics enables one to explain readily
certain physical observations that are not obvious otherwise, such as 
the decrease in chain energy as measured by the scatterers
as particle density increases.
(Reason: competing actions of particles in adjacent intervals lead to decreased
amplitudes in the oscillation of scatterer momenta.)

But the most striking feature of this model by far is that it {\it freezes}, 
by which we refer to the slowing down of the particles. Starting from any 
initial condition, one finds that as time goes on, more and more particles 
will become inactive, \ie, they are in very low energy states, 
leading to fewer and fewer collisions per unit time. On the level of 
macroscopic observations, this phenomenon manifests itself in slowly
drifting measurements, independently of the sizes of the epochs used in
statistical averages. (For example, scatterer energies drift up slowly,
consistent with the fact that effectively there are fewer 
particles in the chain.) This drifting,
however, is very slow, as is the freezing process.
We think it is appropriate to view the system as being in a {\it quasi-stationary} state.

Finally, we address what we believe is the root cause of the particle slow-down. 
To simplify the local dynamics, the authors of \cite{CE2009} fixed the positions
of the scatterers while retaining the usual rules of energy and momentum 
exchanges. One way to put it is that the scatterers have {\it no recoil}. 
In such a system, phase-space volume is not conserved by the dynamics. 
Specifically, in a particle-scatterer collision, phase-space volume is contracted
when energy flows from particle to scatterer, expanded when it flows in the 
opposite direction. Since dynamical processes have
a way to head toward volume-contracting regimes (\ie, attractors, and not
repellors), the tendency here is for particle energy to be dissipated.
One is thus reminded again that interactions 
that do not conserve phase volume have consequences.

\noindent{\bf Acknowledgments}
JPE thanks the Courant Institute, where this work was begun, for its
hospitality. JPE was supported in part by the Fonds National Suisse,  LSY was
supported in part by NSF Grant DMS-0600974.

\bibliographystyle{JPE}
\bibliography{refs}

\end{document}